\definecolor{myurlcolor}{rgb}{0,0,0.7}
\definecolor{myrefcolor}{rgb}{0.8,0,0}
\theoremstyle{plain}
\newtheorem{thm}{Theorem}
\newtheorem{lem}[thm]{Lemma}
\newtheorem{prop}[thm]{Proposition}
\theoremstyle{definition}
\theoremstyle{remark}
\newcommand{\R}{\mathbb{R}}
\newcommand{\beq}{\begin{equation}}
\newcommand{\eeq}{\end{equation}}
\renewcommand{\sp}{\mathrm{sp}}
\begin{document}

\title{Nonlocality with less Complementarity}

\author{Tobias Fritz}
\email[]{tobias.fritz@icfo.es}
\affiliation{ICFO -- Institut de Ciencies Fotoniques, Mediterranean Technology Park, 08860 Castelldefels (Barcelona), Spain}

\date{\today}

\begin{abstract}
In quantum mechanics, nonlocality (a violation of a Bell inequality) is intimately linked to complementarity, by which we mean that consistently assigning values to different observables at the same time is not possible. Nonlocality can only occur when some of the relevant observables do not commute, and this noncommutativity makes the observables complementary. Beyond quantum mechanics, the concept of complementarity can be formalized in several distinct ways. Here we describe some of these possible formalizations and ask how they relate to nonlocality. We partially answer this question by describing two toy theories which display nonlocality and obey the no-signaling principle, although each of them does not display a certain kind of complementarity. The first toy theory has the property that it maximally violates the CHSH inequality, although the corresponding local observables are pairwise jointly measurable. The second toy theory also maximally violates the CHSH inequality, although its state space is classical and all measurements are mutually nondisturbing: if a measurement sequence contains some measurement twice with any number of other measurements in between, then these two measurements give the same outcome with certainty. 
\end{abstract}

\maketitle

\section{Introduction}
\label{intro}

Since the groundbreaking work of Bell~\cite{Bell}, it has been recognized that quantum theory cannot be completed to a theory with local hidden variables; this result is known as \emph{Bell's theorem}, or almost synonymously as \emph{quantum nonlocality}. Bell's theorem is usually proven by deriving inequalities for the correlations between observables located at spatially separated sites, which are satisfied for any theory having local hidden variables, but violated by many quantum-mechanical models. See~\cite{Shimony} or \cite{BY} for background on Bell's theorem, including a clear overview of properties of hidden variable theories and the precise assumptions needed for the proof of Bell's theorem.

The fact that Bell inequalities can be violated in many quantum-mechanical models stems from the fact that, due to non-commutativity, a simultaneous joint measurement of these observables is not possible; quantum observables display \emph{complementarity}. But what does this mean, exactly? And what happens beyond quantum mechanics? From a theory-independent perspective, how do nonlocality and complementarity relate? This is the kind of question we are concerned with in this paper.

Recent work by Oppenheim and Wehner~\cite{OW} has brought to light a rather general theory-independent quantitative relationship between uncertainty relations and nonlocality. The absence of an uncertainty relation rules out nonlocality. They have also considered a certain different notion of complementarity\footnote{In Section D of~\cite{OW}. Note that Oppenheim and Wehner do not regard the existence of an uncertainty relation as complementarity.} (our property~\ref{adm}) and noted that the absence of such complementarity also rules out nonlocality.

In this paper, we would like to define two other concepts of complementarity. In the particular case of quantum theory, both are equivalent to non-commutativity of observables. For each of these two notions of complementarity, we then ask whether there are no-signaling theories which display nonlocality, although they do not display this kind of complementarity. We answer this in the positive by finding unphysical toy theories which have all the required properties.

\section{What is Complementarity?}
\label{wic}

So what actually is complementarity? In particular, once we leave the framework of quantum mechanics, what does it mean to say that observables $A$ and $B$ are complementary? Bohr~\cite{Bohr} has coined this term and used it to refer to the practical impossibility of a joint measurement of $A$ and $B$:
\begin{quote}
``[\ldots] it is only the mutual exclusion of any two experimental procedures, permitting the unambiguous
definition of complementary physical quantities,
which provides room for new physical laws, the
coexistence of which might at first sight appear
irreconcilable with the basic principles of science.
It is just this entirely new situation as regards
the description of physical phenomena, that the
notion of complementarity aims at characterizing.''
\end{quote}

This reduces the problem of defining complementarity to the problem of defining joint measurability.

Clearly, the concept of joint measurability should be taken to mean joint measurability \emph{in theory} rather than joint measurability \emph{in practice}; for if it were to mean the latter, than this would let the complementarity of $A$ and $B$ depend on the current state of the art in experiment and on the skill of the experimenter.

One can attribute different operational meanings to the concept of joint measurability of $A$ and $B$. We will consider the following four:

\bgroup
\renewcommand\theenumi{(\alph{enumi})}
\renewcommand\labelenumi{\theenumi}
\begin{enumerate}
\item\label{jd} Joint distribution: there is an observable $C$ from the measurement of which one can deduce both the value of $A$ and the value of $B$ (meaning that the probability distribution over outcomes of $C$ contains those of $A$ and $B$ as marginals).
\item\label{sdm} Symmetric nondisturbance of measurements: in the measurement sequence $ABA$, the two measurements of $A$ always give the same result. Similarly for the sequence $BAB$.
\item\label{adm} Asymmetric nondisturbance of measurements~\cite[{Sec.~7-4.$\mathcal{C}$}]{Peres}: on any initial state, the measurement sequence $AB$ gives the same probability distribution over outcomes of $B$ as a direct measurement of $B$.
\item\label{ur} No uncertainty relation: there is no non-trivial uncertainty relation between the outcome distribution of $A$ and that of $B$. This can be formalized in several inequivalent ways, e.g. in terms of entropic uncertainty relations~\cite{MU} or in terms of the fine-grained uncertainty relations of~\cite{OW}.
\end{enumerate}

\begin{lem}
\label{qc}
In the case of quantum theory with projective measurements, properties~\ref{jd},~\ref{sdm} and~\ref{adm} are equivalent to commutativity of $A$ and $B$.
\end{lem}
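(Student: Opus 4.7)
Write the spectral decompositions $A = \sum_a a P_a$ and $B = \sum_b b Q_b$. Commutativity of $A$ and $B$ is then equivalent to $[P_a, Q_b] = 0$ for every pair, and the goal is to show that each of \ref{jd}, \ref{sdm}, and \ref{adm} is also equivalent to this condition. The forward implications are immediate: when all $P_a$ commute with all $Q_b$, the products $\{P_a Q_b\}_{a,b}$ are themselves mutually orthogonal projectors summing to the identity, and hence form a PVM whose measurement supplies the joint observable $C$ required by \ref{jd}; properties \ref{sdm} and \ref{adm} then follow by commuting projectors through the trace formulas for the relevant outcome probabilities.

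For the converses, the central algebraic fact I would invoke is that the pinching map $X \mapsto \sum_a P_a X P_a$ fixes a Hermitian $X$ if and only if $X$ commutes with every $P_a$. For \ref{adm}, the post-measurement state after $A$ on input $\rho$ is $\sum_a P_a \rho P_a$, so nondisturbance reads $\mathrm{tr}\bigl(Q_b \sum_a P_a \rho P_a\bigr) = \mathrm{tr}(Q_b \rho)$ for all $\rho$ and $b$; by cyclicity this gives $\sum_a P_a Q_b P_a = Q_b$, hence $[P_a, Q_b] = 0$. For \ref{sdm}, demanding that the two $A$-outcomes in the sequence $ABA$ always agree forces $\mathrm{tr}(P_{a'} Q_b P_a \rho P_a Q_b P_{a'}) = 0$ for $a \neq a'$ and arbitrary $\rho$, hence $P_{a'} Q_b P_a = 0$; summing over $a' \neq a$ then yields $P_a Q_b = P_a Q_b P_a$, a Hermitian operator, so $P_a Q_b = Q_b P_a$. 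For \ref{jd}, a joint observable $C = \sum_c c R_c$ with the stipulated marginals is accompanied by classical post-processing functions $f_A, f_B$ on the outcome set of $C$ such that $P_a = \sum_{c : f_A(c) = a} R_c$ and $Q_b = \sum_{c : f_B(c) = b} R_c$; mutual orthogonality of the $R_c$ then makes $P_a Q_b = \sum_{c : f_A(c) = a,\, f_B(c) = b} R_c$ a projection, again Hermitian.

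I do not anticipate a serious obstacle. The only mildly delicate point is the precise reading of \ref{jd}: the phrase ``deducing the value of $A$ from $C$'' must be understood as the existence of a classical post-processing function, which is what converts the marginal condition into the projector-sum identities used above. Everything else reduces to the pinching identity together with the observation that a Hermitian operator of the form $PXP$ commutes with $P$.
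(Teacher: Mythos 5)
Your proposal is correct and follows essentially the same route as the paper's proof: both reduce commutativity of $A$ and $B$ to commutativity of their spectral projections, extract the identity $P_{a'}Q_bP_a=0$ for $a\neq a'$ (equivalently, the pinching fixed-point condition) from the relevant property, and close with a Hermiticity/orthogonality argument. The only cosmetic differences are that you treat~\ref{sdm} and~\ref{adm} separately rather than jointly, and phrase the converse of~\ref{jd} via explicit post-processing functions and projector sums instead of polynomials in $C$; the substance is identical.
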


This is straightforward to prove, but we nevertheless include a proof sketch for the sake of completeness.

\begin{proof}
We work in terms of the spectral decompositions $B=\sum_{\lambda\in\sp(A)} \lambda P_\lambda$ and $B=\sum_{\lambda\in\sp(B)} \lambda Q_\lambda$. Each $P_\lambda$ can be written as a polynomial in $A$, and each $Q_\lambda$ as a polynomial in $B$. So if $A$ and $B$ commute, then $P_\lambda Q_\mu=Q_\mu P_\lambda$ for all $\lambda,\mu$; the converse is also clear.

To see that property~\ref{jd} follows from commutativity, choose an injection $j:\sp(A)\times\sp(B)\hookrightarrow\R$ and define $C=\sum_{\lambda\in\sp(A),\,\mu\in\sp(B)} j(\lambda,\mu) P_\lambda Q_\mu$. By injectivity of $j$, a measurement of $C$ automatically also measures both $A$ and $B$. Conversely, if $C$ corresponds to a joint measurement of $A$ and $B$ as in property~\ref{jd}, then there are polynomials $f$ and $g$ such that $A=f(C)$ and $B=g(C)$, and therefore $A$ commutes with $B$.

Property~\ref{sdm} and property~\ref{adm} are each equivalent to $\sum_\lambda P_\lambda Q_\mu P_\lambda = Q_\mu$ and $\sum_\mu Q_\mu P_\lambda Q_\mu = P_\lambda$. The first equation implies that $P_\lambda Q_\mu P_\lambda'=0$ for $\lambda\neq\lambda'$. Therefore,
\begin{align*}
P_\lambda Q_\mu & = \sum_{\lambda'} P_\lambda Q_\mu P_{\lambda'} = \sum_\lambda P_\lambda Q_\mu P_\lambda \\
& = \sum_{\lambda'} P_{\lambda'} Q_\mu P_\lambda = Q_\mu P_\lambda
\end{align*}
which implies commutativity of $A$ and $B$. The same calculation also shows the converse implication.
\end{proof}

Despite this equivalence in the quantum case, beyond quantum theory these four notions of complementarity are conceptually very different. For example, properties~\ref{jd} and~\ref{ur} do not care about any post-measurement states under measuring either $A$ or $B$, but only about the distribution of outcome probabilities; for property~\ref{adm}, this only applies to measurement $A$; while for property~\ref{sdm}, the post-measurement states of both $A$ and $B$ are relevant. Further discussion on the interrelations between these properties is beyond the scope of this work. 

\section{Nonlocality requires Complementarity?}

It seems to be intuitive that complementarity is a necessary requirement for displaying nonlocality. One may believe that any no-signaling theory which does not display complementarity admits a local realistic description. Indeed, this can be made precise for property~\ref{adm} as follows~\cite[D.1]{OW}: if property~\ref{adm} holds for all pairs $A$, $B$, then it follows that the outcome distribution of any observable $A$ does not depend on whether other $A$ is measured directly, or other observables are measured before $A$. This implies that one can assign a joint probability distribution to all the observables at once, and hence the existence of a local realistic model of the resulting correlations. For bounds on certain kinds of nonlocality in terms of uncertainty relations as in property~\ref{ur}, we also refer to~\cite{OW}.

One might now conjecture that also the properties~\ref{jd} and~\ref{sdm}, together with the no-signaling principle, are likewise sufficient in order to exlucde nonlocal behavior.

In the following, we will show that this is not necessarily so. In Section~\ref{toy1} (resp.~\ref{toy2}), we will describe a no-signaling toy theory which has property~\ref{jd} (resp.~\ref{sdm}) for every pair of measurements $A,B$, but maximally violates the Bell inequality of Clauser, Horn, Shimony and Holt~\cite{CHSH} (CHSH inequality). These two toy theories are more classical than quantum mechanics in the sense of having less complementarity, but nevertheless display a higher degree of nonlocality.

There are several things that should be kept in mind while reading this paper. First and most importantly, our constructions are \emph{toy theories}, which means that we do not ascribe any physical significance to them. In fact, they can easily be seen to be unphysical. Any toy theory displaying nonlocality with ``less'' complementarity than quantum mechanics is necessarily unphysical---precisely \emph{because} of the very feature of displaying less complementarity than quantum mechanics, which is our current theoretical framework for (microscopic) physics. Therefore, the present investigations should not be regarded as actual physics in the sense of describing reality, but rather as theoretical investigations around the foundations of Bell's theorem.

Second, the systems which make up our toy theories are nonlocal systems, in the sense that their states and observables are only defined globally in the whole ``toy universe'', and not locally. This is just like in quantum mechanics, where wave functions are likewise nonlocal entities. For simplicity, our small toy universe only consists of two independent worldline segments called ``Alice'' and ``Bob'', which we take to be spacelike separated.

Third, what we regard as a ``theory'' comprises the definition of \emph{both} states and measurements. While in some frameworks, the set of measurements may be determined by the set of states (and/or vice versa), we do not assume this to be the case. Our toy theories can be formalized in the language of generalized probabilistic theories and then satisfy the Assumptions 1, 2, 3 and 7 of~\cite{B}, but not its Assumptions 4, 5 and 6. See also footnote 4 and Section VIII.A in~\cite{B}.

Fourth, while our toy theories are obviously tailored to achieve maximal nonlocality in the CHSH scenario, it makes perfect sense to consider them independently of any considerations involving Bell inequalities or nonlocality, and in particular independently of any particular Bell scenario. 

Sections~\ref{toy1} and~\ref{toy2} can be read independently.

\section{Nonlocality with jointly measurable observables}
\label{toy1}

We now describe the first toy theory which maximally violates the CHSH inequality by reproducing the correlations of the Popescu-Rohrlich box (PR-box~\cite{PR}, Table~\ref{PRbox}), but nevertheless satisfies property~\ref{jd} for all relevant pairs of observables. The basic idea is that the PR-box correlations between Alice's observables $A_1,A_2$ and Bob's observables $B_1,B_2$ do not prevent $A_1$ and $A_2$ from having a joint distribution; and likewise for $B_1$ and $B_2$. (What it \emph{does} prevent is the existence of a joint distribution for \emph{all four} observables together~\cite{F}.) Moreover, for each Alice and Bob one can define an observable which probes this joint distribution.

We will define the state space of the toy theory in terms of four ``basic observables'' $A_1,A_2,B_1,B_2$. We assume these four observables to be $\pm 1$-valued. Following the idea outlined in the previous paragraph, we now define a (mixed) state of the toy theory to be an assignment of joint outcome probabilities to each pair of observables, such that the different ways to calculate the marginal of each single observable all give the same two-outcome distribution. More concretely, such a state is given by six four-outcome probability distributions
\begin{align}
\begin{split}
\label{sixdist}
P_{A_1,A_2}(a_1,a_2) &,\quad P_{A_1,B_1}(a_1,b_1), \quad P_{A_1,B_2}(a_1,b_2),\\
P_{A_2,B_1}(a_2,b_1) &,\quad P_{A_2,B_2}(a_2,b_2), \quad P_{B_1,B_2}(b_1,b_2),
\end{split}
\end{align}
satisfying the marginal conditions
$$
\sum_{y} P_{X,Y}(x,y) = \sum_{y} P_{X,Y'}(x,y)
$$
for all observables $X,Y,Y'\in\{A_1,A_2,B_1,B_2\}$ and any outcome $x$. When $X$ is an observable of Alice and $Y,Y'$ are Bob's, or vice versa, these are precisely the no-signaling equations.

Since the set of all these states is convex, it is automatically closed under probabilistic mixtures. In the abstract framework of~\cite{AbramBrand}, this state space is the set of empirical models over four measurements $A_1,A_2,B_1,B_2$ with measurement contexts precisely all the pairs of measurements.

This definition should make it clear that we would like to regard the four basic observables $A_1,A_2,B_1,B_2$ as measurements which can all be performed pairwise jointly, but not triplewise jointly. Following this idea, the set of observables of the toy theory is defined to consist of the basic observables together with the four-outcome observables representing pairwise joint measurements. 

The physical picture behind joint measurability of all observable pairs is as follows: there is a source of entangled particle pairs which distributes them to Alice's and Bob's labs, respectively. In Alice's lab, she can choose to measure either $A_1$ or $A_2$; in Bob's lab, he can choose between $B_1$ and $B_2$. Moreover, as an additional option, there is a third experimenter, who has access to the source of entangled pairs; let us call him Charlie. Charlie can decide to let an entangled pair pass by and reach the labs of Alice and Bob; or he can decide to apply a measurement of one of two four-outcome observables $C_A$ and $C_B$. In this latter case, the entangled pair gets destroyed due to the measurement process. The three experimenters find out that the observable $C_A$ behaves like a joint distribution of $A_1$ and $A_2$, while $C_B$ behaves like a joint distribution of $B_1$ and $B_2$, in the sense of property~\ref{jd}. 

This toy theory has the following features:

\begin{table}
\begin{center}
\begin{tabular}{cc|cc|cc}
\multicolumn{6}{c}{}\\
 \multicolumn{2}{c|}{} & \multicolumn{2}{|c|}{$A_1$} & \multicolumn{2}{|c}{$A_2$}\\
\multicolumn{2}{c|}{} & \multicolumn{1}{|c}{$+1$} & \multicolumn{1}{c|}{$-1$} & \multicolumn{1}{|c}{$+1$} & \multicolumn{1}{c}{$-1$}\\
\hline \multirow{2}{*}{$B_1$} & \multicolumn{1}{c|}{$+1$} & \multicolumn{1}{|c}{$\frac{1}{2}$} & \multicolumn{1}{c|}{$0$} & \multicolumn{1}{|c}{$\frac{1}{2}$} & \multicolumn{1}{c}{$0$} \\
 & \multicolumn{1}{c|}{$-1$} & \multicolumn{1}{|c}{$0$} & \multicolumn{1}{c|}{$\frac{1}{2}$} & \multicolumn{1}{|c}{$0$} & \multicolumn{1}{c}{$\frac{1}{2}$} \\
\hline \multirow{2}{*}{$B_2$} & \multicolumn{1}{c|}{$+1$} & \multicolumn{1}{|c}{$\frac{1}{2}$} & \multicolumn{1}{c|}{$0$} & \multicolumn{1}{|c}{$0$} & \multicolumn{1}{c}{$\frac{1}{2}$} \\
 & \multicolumn{1}{c|}{$-1$} & \multicolumn{1}{|c}{$0$} & \multicolumn{1}{c|}{$\frac{1}{2}$} & \multicolumn{1}{|c}{$\frac{1}{2}$} & \multicolumn{1}{c}{$0$} \\
\end{tabular}
\end{center}
\caption{Table of joint outcome probabilities for the Popescu-Rorhlich box.}
\label{PRbox}
\end{table}

\begin{itemize}
\item Among the basic observables $A_1,A_2,B_1,B_2$, all pairs are jointly measurable in the sense of property~\ref{jd};
\item The theory contains states which display PR-box correlations: defining a state through the joint distributions of Table~\ref{PRbox} together with any joint distribution of $A_1$ and $A_2$ with uniform marginals, and likewise between $B_1$ and $B_2$, gives the desired result.
\end{itemize}

We conclude that joint measurability of local observables together with the no-signaling principle is not sufficient to guarantee the existence of a local hidden variable description.

\section{Nonlocality with mutually nondisturbing measurements}
\label{toy2}

We will now define our second toy theory in terms of a finite number of states and observables. Similar to before, the observables are all $\pm 1$-valued and they are going to be denoted by $A_1$, $A_2$, $B_1$, $B_2$; we will think of $A_1$ and $A_2$ as observables located on the subsystem ``Alice'', while $B_1$ and $B_2$ are observables on the subsystem ``Bob''. The model will have the following properties:

\begin{enumerate}
\item All measurements are perfectly mutually nondisturbing: sequential measurements of the same observable always give the same value, no matter which other measurements were conducted in the meantime. \\
(On the other hand, for each measurement, there are some states on which the outcome is random. Post-measurement states are in general different from pre-measurement states.)
\item Signaling is impossible: the outcome probability distribution of any measurement sequence of Bob does not depend on the actions of Alice, and vice versa.
\item The correlations between $A_i$ and $B_j$ constitute a Popescu-Rohrlich box~\cite{PR}, thereby maximally violating the CHSH inequality~\cite{CHSH}.
\end{enumerate}

Before diving into the details, we describe a different model\footnote{due to Matthew Pusey.} which is operationally almost equivalent to our toy theory. Suppose that Alice and Bob share a PR-box, and in addition each party has a memory of two bits available. The purpose of Alice's two bits is to store any previously measured values of $A_1$ and $A_2$, and likewise for Bob's two bits. Now upon the first measurement carried out on each subsystem, the PR-box is used, and the measurement outcome is stored in the corresponding bit, while the other local bit is initialized randomly and independently. All subsequent measurements only reproduce the values of the corresponding local bit. This simple protocol already is a model having property~\ref{sdm} while displaying PR-box correlations.

The following toy theory is essentially an elaboration on this idea. It has the additional feature of having a classical state space. Also, it does not artificially distinguish the first measurement from subsequent ones.

A pure state in the toy theory is defined to be a formal expression of the form $a_1^j a_2^k b_1^l b_2^m$, where each of the upper indices is an element of the set of \emph{hidden values}
$$
V=\{\emptyset , -, +, \ominus, \oplus\}\:.
$$
An example of a pure state is $a_1^+a_2^\emptyset b_1^\ominus b_2^-$. The elements of $V$ correspond to properties of the associated observable as follows. When the state is of the form $a_1^\emptyset\ldots$, then we say that the hidden value of the observable $A_1$ is \emph{undetermined}; on states of the form $a_1^-\ldots$, we say that $A_1$ has a \emph{potential value} of $-1$, while on those which look like $a_1^+\ldots$, the observable $A_1$ has a potential value of $+1$. On states like $a_1^\ominus\ldots$, we associate to $A_1$ an \emph{actual value} of $-1$, while on $a_1^\oplus\ldots$, the observable $A_1$ has an actual value of $+1$. The same applies to the observables $A_2$, $B_1$ and $B_2$. We will soon explain the significance of undetermined values, potential values and actual values.

A mixed state is defined to be a probabilistic ensemble of pure states. This means that the theory has a classical state space in the sense that every state is a \emph{unique} probabilistic mixture of pure states. We will mostly work on the level of pure states rather than ensembles; when not explicitly mentioned otherwise, ``state'' means ``pure state''.

We now need to define the measurements. Since now we are interested in measurement sequences, we also need to consider post-measurement states. Therefore, a measurement definition consists of assignments of probabilities for each outcome on each pre-measurement state, as well as a (probabilistic) assignment of post-measurement state for each combination of outcome and pre-measurement state which occurs with non-zero probability. We take these data to be given as follows:

\begin{table}
\centering
\begin{tabular}{c|c|c}
 & outcome $-1$ & outcome $+1$ \\
\hline
$A_1$ & $a_1^\ominus a_2^\emptyset b_1^- b_2^- $ & $a_1^\oplus a_2^\emptyset b_1^+ b_2^+ $ \\
$A_2$ & $a_1^\emptyset a_2^\ominus b_1^- b_2^+ $ & $a_1^\emptyset a_2^\oplus b_1^+ b_2^- $ \\
$B_1$ & $a_1^- a_2^- b_1^\ominus b_2^\emptyset $ & $a_1^+ a_2^+ b_1^\oplus b_2^\emptyset $ \\
$B_2$ & $a_1^- a_2^+ b_1^\emptyset b_2^\ominus $ & $a_1^+ a_2^- b_1^\emptyset b_2^\oplus $ 
\end{tabular}
\caption{Table of post-measurement states for the model described in Section~\ref{toy2} with pre-measurement state $a_1^\emptyset a_2^\emptyset b_1^\emptyset b_2^\emptyset$. For each observable, each outcome has probability $1/2$, and the post-measurement state is given by the table entry corresponding to that outcome.}
\label{meas}
\end{table}

\begin{enumerate}
\item The outcome probabilities for any observable are determined by the hidden value of that observable. If this value is undetermined, both outcomes occur equally likely with probability $1/2$; otherwise, the potential value or actual value is reproduced with certainty.
\item The post-measurement states are defined by specifying how the hidden values of the observables change. For all pre-measurement states except $a_1^\emptyset a_2^\emptyset b_1^\emptyset b_2^\emptyset$, we define this to work as follows. The hidden value of the observable always turns into that actual value which corresponds to the measurement outcome. All other hidden values stay untouched, except when the other observable of the same party has a potential value; in this case, this potential value flips its sign with a probability of $1/2$.
\item The post-measurement states of $a_1^\emptyset a_2^\emptyset b_1^\emptyset b_2^\emptyset$ are as in Table~\ref{meas}.
\end{enumerate}

For example, on the pre-measurement state $a_1^+a_2^\emptyset b_1^\ominus b_2^-$, a measurement of $A_1$ gives a definite $+1$ outcome and yields the post-measurement state $a_1^\oplus a_2^\emptyset b_1^\ominus b_2^-$; a measurement of $A_2$ gives a random outcome; if this outcome is $+1$, then the post-measurement state is randomly chosen between $a_1^+ a_2^\ominus b_1^\ominus b_2^-$ and $a_1^- a_2^\ominus b_1^\ominus b_2^-$; if it is $-1$, then between $a_1^+ a_2^\oplus b_1^\ominus b_2^-$ and $a_1^- a_2^\oplus b_1^\ominus b_2^-$. Measuring $B_1$ gives a definite $-1$ outcome where the two post-measurement states $a_1^+a_2^\emptyset b_1^\ominus b_2^-$ and $a_1^+a_2^\emptyset b_1^\ominus b_2^+$ are equally likely. Finally, a measurement of $B_2$ has a definite $-1$ outcome, with post-measurement state $a_1^+a_2^\emptyset b_1^\ominus b_2^\ominus$.

This ends the definition of the model. We can now turn to the study of its properties. 

One characteristic property of classical systems is that measurements do not change the state of the system. This is not always the case in our model; for example, when the pre-measurement state is $x=a_1^\emptyset a_2^\emptyset b_1^\emptyset b_2^\emptyset$, then the post-measurement is always different from $x$, no matter what the measurement and its outcome are. Nevertheless, the model has several nice properties which make it look classical in some respects. One of these is the following:

\begin{lem}
\label{mutually nondisturbing}
For any pre-measurement state, all measurements in this model are mutually nondisturbing: if
$$
\ldots,\, A_i,\, \ldots,\, A_i,\, \ldots
$$
is any sequence of measurements containing $A_i$ at least twice, then these two measurements of $A_i$ have the same outcome with probability $1$. Likewise for $B_j$.
\end{lem}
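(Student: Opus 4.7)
The plan is to argue that once an observable has been measured, its hidden value is ``locked in'' to an actual value matching that outcome, and that no subsequent measurement can alter this actual value. Since the outcome distribution of any measurement is determined solely by the hidden value of the measured observable, and an actual value $\oplus$ or $\ominus$ deterministically reproduces the corresponding outcome, this will force the second measurement of $A_i$ to match the first.

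First I would verify that immediately after the first measurement of $A_i$, its hidden value is $\oplus$ or $\ominus$ matching the outcome. For pre-measurement states distinct from $a_1^\emptyset a_2^\emptyset b_1^\emptyset b_2^\emptyset$, this is the stated rule ``the hidden value of the observable always turns into that actual value which corresponds to the measurement outcome''; for the one exceptional state it follows by direct inspection of Table~\ref{meas}. Next I would verify that on any pre-measurement state in which $A_i$ already carries an actual value, no subsequent measurement can change it. Such a state is never $a_1^\emptyset a_2^\emptyset b_1^\emptyset b_2^\emptyset$, so the general update rule applies: a hidden value changes only (a) if the observable itself is being measured, or (b) if the measured observable is the other observable of the same party and the hidden value in question is a \emph{potential} value $\pm$. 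In case (a), the outcome is deterministic and matches the stored actual value, so the new actual value coincides with the old. Case (b) is inapplicable because $\oplus$ and $\ominus$ are not potential values.

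Combining these two observations by induction on the length of the intermediate measurement sequence, the hidden value of $A_i$ stays fixed at the actual value set by the first measurement, and so the second measurement of $A_i$ reproduces the first outcome with probability $1$. The same argument applies verbatim to $B_j$. There is no real obstacle here; the proof is a careful reading of the update rules, and the only point requiring attention is the sharp distinction between potential values $\pm$ (which can be flipped when the partner observable of the same party is measured) and actual values $\oplus, \ominus$ (which are rigid).
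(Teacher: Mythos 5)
Your proof is correct and follows essentially the same route as the paper's: after the first measurement of $A_i$ the hidden value becomes an actual value matching the outcome, and actual values are rigid under all subsequent measurements, so every later measurement of $A_i$ reproduces that outcome. You have merely spelled out the case-check (the exceptional state $a_1^\emptyset a_2^\emptyset b_1^\emptyset b_2^\emptyset$, and the potential-versus-actual distinction in the update rule) that the paper's one-line proof leaves implicit.
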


\begin{proof}
By definition of post-measurement states, after the first measurement of $A_i$ the state has an actual value on $a_i$, and this value never changes in any subsequent measurement. Hence all further measurements of $A_i$ will reproduce this value with certainty.
\end{proof}

\begin{prop}
\label{nosig}
The model does not allow signaling in the following sense: for any initial state, the outcome distribution of any measurement sequence of Bob does not depend on how many and which measurements Alice conducts in the meantime, and conversely.
\end{prop}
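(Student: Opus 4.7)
The plan is to split according to whether the initial state is the distinguished ``all-undetermined'' state $x_0 = a_1^\emptyset a_2^\emptyset b_1^\emptyset b_2^\emptyset$ or not, and to exploit the fact that Alice's half $(a_1, a_2)$ and Bob's half $(b_1, b_2)$ of the state are effectively decoupled whenever the general post-measurement rule applies.

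First I would observe that actual values $\ominus, \oplus$ are preserved by every subsequent measurement: the measured observable ends up with an actual value, the partner-flip rule only acts on potential values, and the other two entries are untouched. Consequently, every pure state reachable from a state $\neq x_0$ still has some non-$\emptyset$ exponent, so the Table~\ref{meas} rule is never invoked past the first step. Under the general rule, Alice's measurement of $A_i$ depends only on the exponents on $(a_1, a_2)$ and changes only those exponents; symmetrically for Bob. In particular, the marginal distribution of $(b_1^l, b_2^m)$ is preserved by any action of Alice, while Bob's outcome probabilities and the subsequent evolution of this marginal depend only on $(l, m)$. A straightforward induction on the length of an interleaved measurement sequence then shows that, for any initial state $\neq x_0$, Bob's joint outcome distribution depends only on Bob's own subsequence.

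It remains to handle initial state $x_0$. Every first measurement produces a pure state with at least one non-$\emptyset$ exponent, so the decoupling argument controls everything after the first step. What must be checked directly is that, if Alice's measurement precedes any of Bob's, the induced Bob-marginal reproduces the outcome statistics Bob would obtain starting from $x_0$ via Table~\ref{meas}. By Lemma~\ref{mutually nondisturbing}, every Bob sequence is statistically equivalent to one containing at most one occurrence of each of $B_1, B_2$, so it suffices to compare the joint outcome distributions of $B_1$, $B_2$, $B_1B_2$, $B_2B_1$ applied to three initial Bob-marginals: direct from $x_0$ via Table~\ref{meas}, the mixture $\tfrac12[(b_1^-, b_2^-) + (b_1^+, b_2^+)]$ produced by Alice's $A_1$, and the mixture $\tfrac12[(b_1^-, b_2^+) + (b_1^+, b_2^-)]$ produced by Alice's $A_2$.

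The main obstacle, and the one place where the specific entries of Table~\ref{meas} and the partner-flip probability $1/2$ are actually used, is this final case check. The flip probability is tuned precisely so that, after Bob's first measurement on a correlated mixture such as $\tfrac12[(b_1^-, b_2^-) + (b_1^+, b_2^+)]$, the partner's potential value is randomized to an independent uniform; a short enumeration then verifies that all three initial Bob-marginals yield uniform single-measurement marginals and the uniform joint distribution on $\{\pm 1\}^2$. The converse direction (that Bob's actions do not affect Alice's statistics) follows from the manifest Alice--Bob symmetry of the model.
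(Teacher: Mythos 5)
Your proposal is correct and follows essentially the same route as the paper's proof: the same case split on whether the initial state is $a_1^\emptyset a_2^\emptyset b_1^\emptyset b_2^\emptyset$, the same locality/decoupling observation for all other states, the same reduction via Lemma~\ref{mutually nondisturbing} to at most one measurement of each of $B_1,B_2$, and the same key point that the probability-$1/2$ flip of the partner's potential value erases the correlation or anticorrelation induced by Alice's first measurement, yielding the uniform distribution on $\{\pm1\}^2$ in every case. Your write-up is merely more explicit (the induction for the decoupled case and the enumeration of the three Bob-marginals) than the paper's.
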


\begin{proof}
On any initial state which is not $a_1^\emptyset a_2^\emptyset b_1^\emptyset b_2^\emptyset$, this is clear since the measurement rules are ``local'' in the sense that measuring $A_1$ or $A_2$ can only change the hidden values of $a_1$ and $a_2$, and likewise for $B_1$ or $B_2$.

On the initial state $a_1^\emptyset a_2^\emptyset b_1^\emptyset b_2^\emptyset$, one can reason as follows. We may assume without loss of generality that Bob's measurement sequence contains both $B_1$ and $B_2$ at least once. Then due to lemma~\ref{mutually nondisturbing}, it is actually enough to assume that he measures both of them exactly once. Then we claim that all four outcome sequences $(\pm 1,\pm 1)$ occur with an equal probability of $1/4$ in all cases. This is easy to verify if the first measurement is conducted by Bob, since then again Alice's actions do not change any hidden values of Bob's. If the first measurement is conducted by Alice, then by Table~\ref{meas}, Bob's hidden values are both potential. They are perfectly correlated if Alice's first measurement was $A_1$, and perfectly anticorrelated if she started with $A_2$. In either case, this correlation gets erased due to Bob's first measurement, which may flip one of the two potential values, so that all four possible sign combinations are equally likely, independently of what Alice did.
\end{proof}

For this last part of the proof, the distinction between potential values and actual values is crucial: if we would regard all potential values as actual values which never change in any measurement, then signaling would be possible since Bob could measure whether $B_1$ and $B_2$ are perfectly correlated or perfectly anticorrelated, and thereby he would know whether Alice measured $A_1$ or $A_2$.

\begin{prop}
On the initial state $a_1^\emptyset a_2^\emptyset b_1^\emptyset b_2^\emptyset$, the correlations between $A_i$ and $B_j$ coincide with those of the PR-box (Table~\ref{PRbox}).
\end{prop}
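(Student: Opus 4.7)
The plan is to verify the proposition by direct case analysis using Table~\ref{meas}. The proposition concerns a two-measurement scenario, one per party; by Proposition~\ref{nosig}, the order in which Alice and Bob perform their single measurements cannot affect the outcome statistics, so it will suffice to fix one ordering, say Alice first, and then remark that the other ordering is handled symmetrically.

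First I would record the marginals. Reading off Table~\ref{meas}, each of the four observables, when measured on $a_1^\emptyset a_2^\emptyset b_1^\emptyset b_2^\emptyset$, yields $\pm 1$ with probability $1/2$, so all single-party marginals are uniform, already matching the row and column sums of Table~\ref{PRbox}.

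Next I would compute the conditional distribution of Bob's outcome given Alice's. Suppose Alice measures $A_i$ first and obtains outcome $\varepsilon \in \{\pm 1\}$; Table~\ref{meas} specifies the post-measurement state. In particular, Bob's two hidden values become \emph{potential} values with signs $(+\varepsilon,+\varepsilon)$ when $i=1$, and $(+\varepsilon,-\varepsilon)$ when $i=2$. A subsequent measurement of $B_j$ then reproduces its own potential value with certainty, since the outcome rule depends only on the hidden value of the observable being measured. Multiplying by the uniform marginal for $A_i$, this yields perfect correlation between $A_i$ and $B_j$ for $(i,j) \in \{(1,1),(1,2),(2,1)\}$ and perfect anticorrelation for $(i,j)=(2,2)$, which is exactly the PR-box pattern of Table~\ref{PRbox}.

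Finally, the Bob-first case goes through by the same mechanism with the roles of Alice and Bob exchanged: Bob's measurement of $B_j$ places Alice's two hidden values at $(+\varepsilon,+\varepsilon)$ for $j=1$ and $(+\varepsilon,-\varepsilon)$ for $j=2$, and Alice's subsequent measurement reproduces her own potential value with certainty, yielding the same joint distribution and thus corroborating Proposition~\ref{nosig} on this initial state. There is no real obstacle; the computation is mechanical. The only point worth flagging is the essential role of \emph{potential} (as opposed to actual) values in Table~\ref{meas}, exactly as in the proof of Proposition~\ref{nosig}: if the entries were actual values, they would not be allowed to flip when the other local observable is measured, and the subtle interplay between no-signaling and PR-box correlations would break.
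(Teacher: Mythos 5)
Your verification is correct and is exactly the direct case check that the paper's own proof leaves to the reader (the paper merely states that the claim ``can be directly checked'' from Table~\ref{meas}, which was built to produce the PR-box). Your reading of the post-measurement potential values, the certainty rule for reproducing them, and the symmetric Bob-first ordering all match the construction, so nothing is missing.
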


\begin{proof}
This can be directly checked from the rules specified above together with Table~\ref{meas}, which has been constructed precisely in such a way that the correlations of the Popescu-Rohrlich box (Table~\ref{PRbox}) can be reproduced.
\end{proof}

\paragraph*{Reply to potential criticism.} There are several deficiences which one may deem this toy theory to have and which have been pointed out to us in discussion. We would like to address some of these now.

First, one may remark that the theory is unphysical: it has almost trivial dynamics and contradicts quantum mechanics. This is certainly correct, and it ought to be kept in mind that this is the whole point of this investigation: how classical can a theory be while still displaying nonlocality?

Second, one may object that the theory is very unnatural: the state $a_1^\emptyset a_2^\emptyset b_1^\emptyset b_2^\emptyset$ is a state that can never be prepared within the theory starting from any other state. While this is correct, one can easily amend the theory by extra bells and whistles which remedy this problem and similar ones. For example, one can add a new many-outcome observable $X$ which randomly prepares any state and whose outcome is the numerical representation of this prepared state.

Third, one may wonder what would happen when Alice and Bob measure simultaneously? It seems that joint measurements have not been defined in the theory. But thanks to Proposition~\ref{nosig}, this is irrelevant. If Alice measures $A_i$ and Bob simultaneously measures $B_j$, one can describes this in the toy theory as either the measurement sequence $A_iB_j$ or as the measurement sequence $B_jA_i$. While these two time orderings are inequivalent as transformations on states, Proposition~\ref{nosig} implies that the observational predictions coincide. 

\section{Discussion}

The notion of complementary in fundamental physics refers to the phenomenon that to some pairs of observables, it is impossible to consistently assign values to both of them at the same time. It is generally believed that a no-signaling theory without complementary observables cannot display nonlocality. Along these lines, a quantitative relationship between uncertainty relations and certain kinds of nonlocality has been worked out in~\cite{OW}. In the work, we have pointed out that nonlocality does not necesssarily require complementarity.

In Section~\ref{wic} we have discussed several ways to formalize the notion of joint measurability as properties~\ref{jd} to~\ref{ur}, and noted that a no-signaling theory with property~\ref{adm} or~\ref{ur} for all pairs of observables always allows local hidden variables and therefore cannot display nonlocality. In Section~\ref{toy1}, we have discussed a simple toy example of a generalized probabilistic theory which maximally violates the CHSH inequality by displaying PR-box correlations, although the corresponding local observables are jointly measurable in the sense of property~\ref{jd}. In Section~\ref{toy2}, we have described a no-signaling toy theory which also displays PR-box behavior, although its state space is classical and it satisfies property~\ref{sdm} for all pairs of observables.

While we expect our results to be valid not only for maximal violations of the CHSH inequality, but more generally for any no-signaling violation of any Bell inequality, we have not considered these more general cases so far. For the sake of maximal concreteness, we have preferred to consider a specific example only.

\section*{Acknowledgements}

The author would like to thank Manuel B\"arenz for raising a question which lead to the present results and for valuable feedback on a draft; several critical referees for the same; Antonio Ac\'in, Andrei Khrennikov, Anthony Leverrier and Masanao Ozawa for stimulating discussions and various crucial suggestions; Jean-Daniel Bancal, Gonzalo de la Torre, Giuseppe Prettico and Alessio Serafini for further discussion; and the EU STREP QCS for financial support.

\bibliographystyle{plain}
\bibliography{bell}

\end{document}